\crefname{equation}{}{}
\begin{document}

\twocolumn[
\icmltitle{Stochastic Poisson Surface Reconstruction with One Solve\texorpdfstring{\\}{ }using Geometric Gaussian Processes}

\begin{icmlauthorlist}
\icmlauthor{Sidhanth Holalkere}{cornell}
\icmlauthor{David S. Bindel}{cornell}
\icmlauthor{Silvia Sellán}{columbia,mit}
\icmlauthor{Alexander Terenin}{cornell}
\end{icmlauthorlist}

\icmlaffiliation{cornell}{Cornell University}
\icmlaffiliation{mit}{MIT}
\icmlaffiliation{columbia}{Columbia University}

\icmlkeywords{Surface Reconstruction, Geometric Kernels, Probabilistic Modeling, Computer Graphics}
\icmlcorrespondingauthor{Sidhanth Holalkere}{sh844@cornell.edu}

\vskip 0.3in
]

\printAffiliationsAndNotice{Code: \url{https://github.com/sholalkere/GeoSPSR}.\newline\hspace*{6pt}}

\begin{abstract}
Poisson Surface Reconstruction is a widely-used algorithm for reconstructing a surface from an oriented point cloud. To facilitate applications where only partial surface information is available, or scanning is performed sequentially, a recent line of work proposes to incorporate uncertainty into the reconstructed surface via Gaussian process models. The resulting algorithms first perform Gaussian process interpolation, then solve a set of volumetric partial differential equations globally in space, resulting in a computationally expensive two-stage procedure. In this work, we apply recently-developed techniques from geometric Gaussian processes to combine interpolation and surface reconstruction into a single stage, requiring only one linear solve per sample. The resulting reconstructed surface samples can be queried locally in space, without the use of problem-dependent volumetric meshes or grids. These capabilities enable one to (a) perform probabilistic collision detection locally around the region of interest, (b) perform ray casting without evaluating points not on the ray's trajectory, and (c) perform next-view planning on a per-ray basis. They also do not requiring one to approximate kernel matrix inverses with diagonal matrices as part of intermediate computations, unlike prior methods. Results show that our approach provides a cleaner, more-principled, and more-flexible stochastic surface reconstruction pipeline.
\end{abstract}

\section{Introduction}

Surface reconstruction algorithms are used to process point cloud data---the most-common format in which real-world three-dimensional scans are captured---into other downstream-usable formats, such as meshes or other surface representations.
Such algorithms are a key computational primitive used in computer graphics pipelines.

The most widely-used approach is \emph{Poisson Surface Reconstruction} \cite{Kazhdan2006PoissonSR,kazhdan2013screened}.
It works by solving a partial differential equation to compute an implicit surface representation---a function where positive (resp. negative) values denote locations inside (resp. outside) the surface, with zeroes denoting the surface.
Its computation relies on standard numerical techniques---most commonly the finite element method---making it simple, reliable, and efficient on compute devices well-suited to sparse linear algebra.

Real-world scans are necessarily imperfect: they involve noise, measurement error, and may even need to gracefully handle situations where parts of the object are occluded or otherwise not observed by the camera.
As a consequence, surface reconstruction algorithms must both interpolate and extrapolate from noisy and potentially incomplete data as part of their computations.
Poisson surface reconstruction typically does this by blurring the data using a smooth kernel: while simple and effective, this process does not quantify uncertainty about interpolation or extrapolation.

To provide a richer characterization of the information contained in a scan, a recent line of work proposes \emph{Stochastic Poisson Surface Reconstruction} \cite{Sellan2022StochasticPS}---a Bayesian formalism which incorporates uncertainty about the scan through the use of Gaussian process priors.
The implicit surface representation is then recovered by applying Bayes' Rule in the solution space of the respective Poisson equation's stochastic analog.
This approach not only quantifies uncertainty, it provides a potential path for selecting optimal scan directions in a technically principled manner, mirroring data acquisition techniques from areas such as Bayesian optimization \cite{garnett23} and probabilistic numerics \cite{hennig22}.

A key challenge in stochastic Poisson surface reconstruction is that it requires more complex computations than its non-stochastic analogs. 
The simplest approach, proposed by \textcite{Sellan2022StochasticPS}, is to first condition the Gaussian process on point-cloud observations, then solve the required Poisson equation to obtain the surface representation.
Unfortunately, this approach suffers from the classically-cubic scalability of Gaussian processes: the authors handle this by approximating kernel matrix inverses with diagonal matrices.
Then, finite elements are used to solve the Poisson equation---a potentially expensive operation in its own right due to reliance on a volumetric mesh or grid.

In this work, we study the following question: \emph{can one compute the posterior of the implicit surface directly, without solving separate linear systems for interpolating the data and calculating the implicit surface?} We answer this affirmatively using methods from inter-domain Gaussian processes, geometric Gaussian processes, and related areas.
Our techniques:
\1 Produce posterior means, posterior covariances, and posterior random function samples, up to a principled set of approximations, using only kernel-matrix-based linear solves. These are handled at graphics-scale with standard scalable Gaussian process methods, without involving the Poisson equation.
\2 Avoid the use of mesh-based or otherwise global Poisson solves, allowing one to obtain the surface from purely-local evaluations.
\3 Eliminate the coupling between the discretization structure of the enclosed bounding box and the length scale hyperparameter used in interpolation.
\4 Analytically describe the Gaussian process induced by the Poisson solve, and provide user control over length scales and other hyperparameters which is retained in solution space.
\0

\section{Stochastic Surface Reconstruction}

The goal of an \emph{implicit surface reconstruction} algorithm is to produce a function $f : \R^3 \to \R$ representing a solid shape $\Omega\subset\R^3$, given data consisting of an oriented point cloud $(x_i,v_i)_{i=1}^N$, where $x_i \in \p\Omega$ and $v_i\in\R^3$ are surface normals to $\p\Omega$, both potentially noisy.
To represent the surface, $f$ should take positive values inside of $\Omega$, negative values outside of it, so that the zero level set $\p\Omega$ is the reconstructed surface. 
Throughout this work, we use bold italic letters such as $\v{a},\v{b}$ to refer to vectors corresponding to batches of data, and bold upface letters such as $\m{A}$, $\m{B}$ to refer to corresponding matrices, with the convention that functions act on such terms component-wise.

\subsection{Stochastic Poisson Surface Reconstruction}

\emph{Poisson surface reconstruction} is the most widely-used surface reconstruction algorithm, with numerous applications in computer graphics \cite{sellan2024reach,hou2022iterative,berger2017survey} and beyond \cite{andrade20123d,gupta20173d,palomar2016surface}.
It works as follows:
first, the point cloud is interpolated onto a volumetric finite element mesh to produce a vector field $v : \R^3 \to \R^3$ for which $v(x_i) \approx v_i$.
Then, one reconstructs $f$, the implicit representation of $\p\Omega$, by numerically solving the Poisson equation 
\[
\label{eqn:poisson}
\lap f(x) &= \grad \. v(x)
\]
over a hypercube $[0,1]^3$, subject to Neumann boundary conditions $\grad f \. n = 0$ at the hypercube's boundary, where $n$ are the normals.
Given the surface is a priori unknown, the most common approach is to choose the mesh to be a regular grid, with $v$ obtained using trilinear interpolation---see \textcite{Kazhdan2006PoissonSR} for further details.

To enable the surface reconstruction algorithm to assess and propagate uncertainty for downstream use, \textcite{Sellan2022StochasticPS,sellan2023neural} propose \emph{Stochastic Poisson Surface Reconstruction}:
instead of a deterministic $v$, they place a Gaussian process prior $v\~[GP](0,k)$ over the vector field, and calculate the respective posterior distribution in order to interpolate the point cloud data in an uncertainty-aware manner.
Then, they solve a stochastic analog of \Cref{eqn:poisson}, obtaining a random function $f\given \v{v}$, which can be interpreted as the posterior distribution of the implicit surface given the point cloud data. 
This can be seen in \Cref{fig:scorp}.

\begin{figure}
\includegraphics[width=\linewidth]{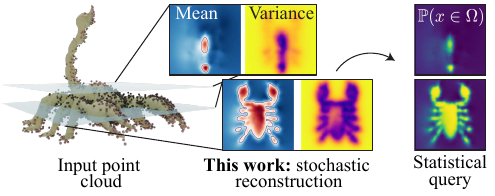}
\caption{Like prior work in stochastic reconstruction \cite{Sellan2022StochasticPS,sellan2023neural}, we choose to represent the space of shapes determined by an input point cloud (left) through a volumetric Gaussian Process (middle) from which standard statistical quantities can be extracted (right).}
\label{fig:scorp}
\end{figure}

\begin{figure*}
\includegraphics[width=\linewidth]{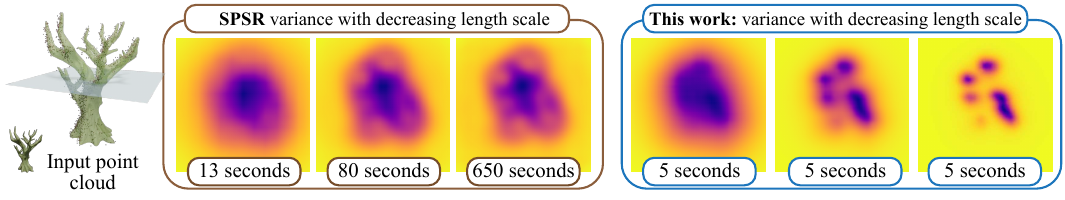}
\caption{Our method removes the interpolation length scale's dependence on any grid resolution, therefore the cost to evaluate the posterior is independent of the length scale. Further, our method enables local evaluation leading to faster computations when we do not require global results. In contrast, the SPSR baseline is grid-dependent and unable to represent the correct variance at lower length scales.}
\label{fig:tree}
\end{figure*}

From a computational perspective, the resulting stochastic formulation is significantly more involved than ordinary Poisson reconstruction. 
Note that, since the Poisson equation is linear, $f\given\v{v}$ is itself a Gaussian process---we review this and other relevant properties of Gaussian processes in the next section.
Leveraging this, \textcite{Sellan2022StochasticPS} propose an approach which computes the mean $\mu_{f\given\v{v}}$ and covariance $k_{f\given\v{v}}$ using the respective finite-element Laplacian $\m{L}$ and divergence matrix $\m{Z}$, as well as the kernel matrices $\m{K}_{\v{v}\v{v}}$ and $\m{K}_{*\v{v}}$ for the training and test points, respectively.
To facilitate computation at point cloud sizes typical in graphics applications, \textcite{Sellan2022StochasticPS} rely on the fact that $\m{L}$ and $\m{Z}$ are sparse, and approximate $\m{K}_{\v{v}\v{v}}^{-1} \approx \f{diag}(\sigma_g \v\rho_{\v{v}})^{-1}$, where $\sigma_g$ is a hyperparameter and $\v\rho$ is a vector measuring local sampling density.

The resulting method enables one to perform a number of queries that are useful in computer graphics and vision applications, like casting rays against the uncertain surface, deciding on a best next view and computing collision likelihoods.
It also comes with a number of limitations.
(i) One needs to solve the Poisson equation globally on the full domain, even if one is only interested in queries in a small region of space.
Recent work has proposed ways to avoid this for non-stochastic Poisson surface reconstruction \cite{Chen:Dipoles:2024}.
(ii) This solution is only obtained on a fixed grid, and one must either perform interpolation for non-grid points, or learn their values using a neural network \cite{sellan2023neural}, at the cost of efficiency and convergence guarantees. 
This artificially ties the length scale of the GP kernel to the resolution of the utilized grid---see \Cref{fig:tree}.
(iii) The approximations used, such as for the inverse covariance, can make uncertainty behavior difficult to control.
One of our goals, in this work, will be to reduce these limitations.

\subsection{Geometric Gaussian Processes and Periodic Kernels}
\label{sec:background:gp}

A \emph{Gaussian process} $f\~[GP](\mu,k)$ is a random function where, given a finite set of input points $\v{x} = (x_1,..,x_N)$, the output $f(\v{x}) \~[N](\v\mu, \m{K}_{\v{x}\v{x}})$ is multivariate normal with mean $\v\mu = \mu(\v{x})$ and covariance $\m{K}_{\v{x}\v{x}} = k(\v{x},\v{x})$.
A key property of Gaussian processes is that their conditional distributions $f\given\v{y}$, given data $f(\v{x}) = \v{y}$, are also Gaussian processes with explicit means and covariances. 
For an overview of these properties, see \textcite{rasmussen06}.

The techniques we develop will rely on ideas from \emph{geometric Gaussian processes}---that is, Gaussian processes whose domain is not $\R^d$ but instead has geometric structure.
Specifically, we work with Gaussian processes $f : \bb{T}^d \to \R$ over the $d$-dimensional torus $\bb{T}^d$.
The kernels of such Gaussian processes can be viewed as periodic functions on $\R^d$: we thus refer to them as \emph{periodic kernels}.

We work with the class of \emph{periodic Matérn kernels} of \textcite{Borovitskiy2020MaternGP}, denoted $k_\nu$, $\nu\in\R^+$.
These kernels admit analytic expressions for $d=1$ and $\nu$ half-integer.
For $\nu = 1/2$, the kernel---and a closely-related expression $\rho_{k_\nu} : \Z^d\to\R$ called the \emph{spectral measure} of $k$---are
\[
k_{1/2}(x,x') &= \cosh\del{\frac{2\pi|x-x'| - \frac{1}{2}}{\kappa}}
\\
\rho_{k_{1/2}}(x,x') &= 2\sinh\del{\frac{1}{2\kappa}}\del{\frac{1}{\kappa^2} + 4\pi^2 n^2}^{-1}
.
\]
We will make use of a number of properties of such kernels in our work.
In particular, periodic Matérn kernels are \emph{stationary}, in the sense that $k$ satisfies $k(x + c, x' + c)$ for any $x,x',c \in \bb{T}^d$, where addition is defined in terms of angles.
Under mild regularity conditions, one can show---see for instance \textcite[Theorem 5 and Appendix B]{Borovitskiy2020MaternGP}---that such $k$ admit the \emph{Mercer's expansion}
\[
\label{eqn:mercers}
k(x,x') = \sum_{n\in\Z^d} \rho_k(n) f_n(x) f_n(x')
\]
where each $f_n$ is either a sine or cosine, with frequency determined by $n$.
Using orthonormality of sines and cosines, can factor this expression to find an analogous representation for the Gaussian process prior $f$: this gives the \emph{Karhunen--Loève expansion} 
\[
\label{eqn:karhunen-loeve}
f(x) &= \sum_{n\in\Z^d} \sqrt{\rho_k(n)} \xi_n  f_n(x)
&
\xi_n &\overset{\f{iid}}{\~}\f{N}(0,1)
.
\]
For a formal treatment, see \textcite{Borovitskiy2020MaternGP}.
These expansions will play a central role in our computations.

\section{Stochastic Poisson Surface Reconstruction}
\label{sec:method}

Stochastic Poisson Surface Reconstruction offers technical capabilities that go beyond the classical, deterministic Poisson reconstruction algorithm---yet, this stochasticity greatly complicates the computational pipeline needed to efficiently run it. 
The original work by \textcite{Sellan2022StochasticPS} proposes a pipeline that mimics classical Poisson reconstruction, in the sense that once Gaussian process interpolation is performed, a traditional finite-element-based computation is used to calculate the implicit surface.
While our goal will be to broadly improve the technical capabilities of SPSR, we will start by asking: \emph{within a Gaussian process formalism, can one calculate the implicit surface directly, without solving more than one linear system?}

\subsection{Bypassing the Poisson Solve using Geometric Gaussian Processes}
\label{sec:onesolve}

We now proceed to answer the above initial guiding question affirmatively.
The mean and covariance of our Gaussian process, assuming a centered prior, are 
\[
\mu_{f\given v}(\.) &= \m{K}_{f(\.)\v{v}}(\m{K}_{\v{v}\v{v}}+\m\Sigma)^{-1}\v{v}
\\
k_{f\given v}(\.,\.') &= k(\.,\.') - \m{K}_{f(\.)\v{v}}(\m{K}_{\v{v}\v{v}}+\m\Sigma)^{-1}\m{K}_{\v{v}f(\.')}
.
\]
Alternatively, rather than calculate means and covariances, one can work with Monte Carlo samples from the posterior.
This avoids needing to store posterior covariances, which contain $\c{O}(N^2)$ entries, in memory.
We can sample from the posterior using the inter-domain Gaussian process \cite{lazaro2009inter} form of pathwise conditioning \cite{wilson20,Wilson2020PathwiseCO}, which is
\[
\label{eqn:pathwise-spsr}
(f\given\v{v})(\.) = f(\.) + \m{K}_{f(\.)\v{v}}(\m{K}_{\v{v}\v{v}}+\m\Sigma)^{-1}(\v{v} - v(\v{x}) - \v\eps)
\]
where equality holds in distribution.
Note that none of the quantities arising from either the mean and covariance, or the pathwise conditioning expression, are grid-dependent.
To compute the posterior, beyond efficiently solving a standard linear system of the Gaussian process type, which we will return to in the sequel, we need to compute either one or two additional kinds of expressions:
\1*[(b)] The cross-covariance term $k_{f,v}(x,x')$.
\2*[(a)] If computing posterior samples, joint samples of the prior vector field $v$ and its induced implicit surface $f$ defined by $\lap f = \grad\. v$.
\0*

At first, this appears challenging: both quantities we need to compute fundamentally involve the Poisson equation, which in general one must solve numerically.
However, if we are able to do so, the Poisson solve is removed: it suffices to solve the random linear system once---the Poisson solve is, effectively, replaced by a grid-independent matrix multiplication.

To achieve this, the key idea will be to leverage the fact that both the Poisson equation, and appropriately-chosen Gaussian processes, are very-well-behaved in the Fourier domain \cite{evans10,Borovitskiy2020MaternGP,borovitskiy21,azangulov24b}.
For this, we make two assumptions:
\1 The prior covariance kernel $k$ is a product of stationary periodic kernels.
\2 The Poisson equation defining the implicit surface has periodic boundary conditions.
\0 
These assumptions enable us to apply the geometric Gaussian process machinery introduced previously: one can view both the vector field and implicit surface as functions on the torus, namely $v : \bb{T}^d\to\R^d$ and $f : \bb{T}^d\to\R$.
Before proceeding, note that compared to the default setup, we have not lost anything substantive: the original Neumann boundary conditions of \textcite{Kazhdan2006PoissonSR,Sellan2022StochasticPS} were selected primarily for convenience in the first place, and in practice one places the surface sufficiently-away from the bounding box to limit its impact.

With the Fourier domain in place, we can apply the Mercer's and Karhunen--Loève decompositions of \Cref{sec:background:gp} to express the cross-covariance as a Fourier series.

\begin{restatable}{proposition}{PropCrossCovariance}
\label{prop:cross-cov}
Let $v\~[GP](0,k)$ where $k$ is a product of sufficiently-smooth one-dimensional stationary periodic kernels, and define $f$ by $\lap f = \grad\. v$, with periodic boundary conditions.
Then
\[
\label{eqn:cross-cov}
k_{f,v_i}(x,x') = \sum_{\mathclap{\substack{n\in\Z^d\\n\neq0}}} \frac{n_i\sqrt{\rho_{k_i}(n)}}{\norm{n}^2} \sin(\innerprod{n}{x-x'})
.
\]
\end{restatable}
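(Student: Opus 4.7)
The plan is to work entirely in the Fourier basis on $\bb{T}^d$, where both the prior Gaussian process and the Poisson equation diagonalize simultaneously, reducing the cross-covariance to a single sum.

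First, I would apply the Karhunen--Loève expansion of \Cref{eqn:karhunen-loeve} componentwise to write each $v_i$ as a Fourier series with independent standard normal coefficients $\xi_n^{(i)}$ weighted by $\sqrt{\rho_{k_i}(n)}$, whose basis functions $f_n$ are the tensor-product sine/cosine eigenfunctions of the Laplacian on $\bb{T}^d$ satisfying $\lap f_n = -\norm{n}^2 f_n$. Since the components of $v$ are independent, the $\xi_n^{(i)}$ are independent across both $n$ and $i$. Applying $\grad\.$ termwise sends sines to cosines and vice versa with a factor $\pm n_i$, and in particular the $n=0$ mode of $\grad\.v$ vanishes identically; this is essential because the Laplacian on the torus is only invertible on the orthogonal complement of the constants.

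Second, I would solve $\lap f = \grad\.v$ by dividing each nonzero Fourier mode of the right-hand side by the eigenvalue $-\norm{n}^2$, with the arbitrary constant of integration set to zero since it does not affect covariances. This produces a closed-form Karhunen--Loève-like expansion of $f$ whose coefficient at mode $n\neq 0$ is a sum over $i$ of terms proportional to $n_i \sqrt{\rho_{k_i}(n)}\, \xi_n^{(i)} / \norm{n}^2$ multiplied by a sine or cosine.

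Finally, I would compute $k_{f,v_j}(x,x')$ as the expectation of the product of the series for $f(x)$ and $v_j(x')$. Independence of the $\xi_n^{(i)}$ collapses the double sum to a single sum over $n\neq 0$ in which only the $i=j$ term contributes, leaving an expression in cosine/sine cross-products at $x$ and $x'$ that recombines via a standard trigonometric identity into $\sin\innerprod{n}{x-x'}$. The main obstacle will be keeping track of the $\pm$ signs and the sine/cosine swaps produced by $\p_j$; I would sidestep this by temporarily complexifying and using $e^{i\innerprod{n}{x}}$ as the Fourier basis, so that $\p_j$ acts diagonally as multiplication by $in_j$, and then recover the real sine structure at the end by pairing modes $\pm n$ and invoking the symmetry $\rho_{k_i}(-n)=\rho_{k_i}(n)$ that holds for any real stationary kernel. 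The product structure of $k_i$ ensures there are no off-diagonal Fourier couplings, which is what makes the answer a clean single sum.
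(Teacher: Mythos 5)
Your proposal is correct and follows essentially the same route as the paper's: the paper likewise diagonalizes both the prior and the Poisson operator in the Fourier basis on $\bb{T}^d$, obtains the joint expansion of $f$ and $v$ by inverting $-\norm{n}^2$ on the nonzero modes (phrased there as positing a Karhunen--Loève expansion for $f$ and matching the coefficients of $\lap f$ and $\grad\. v$ term by term, which is the same computation), and then collapses $\E(f(x)v_i(x'))$ using independence of the Gaussian coefficients and the sine-difference identity. The only substantive additions in the paper are the justifications your sketch leaves implicit: the formal termwise differentiation is made rigorous by lifting $\lap$, $\lap^{-1}$, and the divergence to act on covariance operators of generalized Gaussian fields under the stated Sobolev smoothness, and the interchange of expectation and infinite sum is checked explicitly via summability of $\sqrt{\rho_{k_i}(n)}$.
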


\begin{figure}
\includegraphics[width=\linewidth]{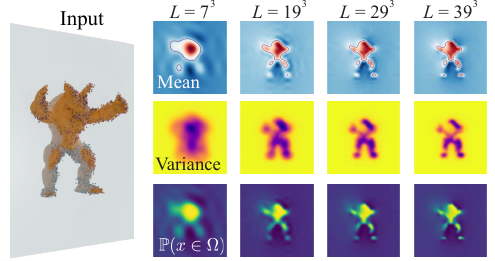}
\caption{The number of Fourier basis functions used, $L$, is a key hyperparameter in our algorithm: experimentally, we find that using too few results in a loss of high frequency detail and observe diminishing returns beyond $L=20^3$.}
\label{fig:armadillo}
\end{figure}

All proofs are given in \Cref{apdx:theory}.
Using this expression, we can compute the cross-covariance matrix $\m{K}_{f(\.)\v{v}}$ approximately by truncating \Cref{eqn:cross-cov}.
We now show that the same techniques also allow us to efficiently jointly sample $f$ and~$v$.

\begin{restatable}{proposition}{PropSampling}
\label{prop:sampling}
Under the same assumptions as \Cref{prop:cross-cov}, the random functions $f$ and $v$ can jointly be written
\[
\label{eqn:cross-cov2}
\mathllap{v_i}(\.) &= \sum_{\mathclap{n\in\Z^d}} \sqrt{\rho_{k_i}(n)} \del{
\begin{aligned}
&\xi_{i,n,1} \cos(\innerprod{n}{\.}) 
\\
&+ \xi_{i,n,2}\sin(\innerprod{n}{\.})
\end{aligned}
}
\\
f(\.) &= \sum_{\substack{n\in\Z^d\\n\neq0}} \sum_{i=1}^d \frac{n_i \sqrt{\rho_{k_i}(n)}}{\norm{n}^2} \del{
\begin{aligned}
&\xi_{i,n,1} \sin(\innerprod{n}{\.})
\\
&-\xi_{i,n,2} \cos(\innerprod{n}{\.})
\end{aligned}
}
\]
where $\xi_{i,n,j}\overset{\f{iid}}{\~}\f{N}(0,1)$.
\end{restatable}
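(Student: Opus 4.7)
The plan is to write both $v$ and $f$ in the same periodic Fourier basis, coupled through shared randomness. First, I expand each component $v_i$ using the Karhunen--Loève expansion \eqref{eqn:karhunen-loeve}. Because $k_i$ is a product of one-dimensional stationary periodic kernels, its Mercer eigenfunctions on $\bb{T}^d$ are tensor products of 1D sines and cosines, reassembling into the real basis $\{\cos(\innerprod{n}{\.}), \sin(\innerprod{n}{\.})\}_{n\in\Z^d}$ with eigenvalues $\rho_{k_i}(n)$. Applying \eqref{eqn:karhunen-loeve} with i.i.d. standard normals $\xi_{i,n,1}$ and $\xi_{i,n,2}$ paired to the cosine and sine modes respectively yields the stated form for $v_i$.

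To obtain $f$, I differentiate the series for $v$ term-by-term and then invert the Laplacian mode-by-mode. Using $\p_j\cos(\innerprod{n}{x}) = -n_j\sin(\innerprod{n}{x})$ and $\p_j\sin(\innerprod{n}{x}) = n_j\cos(\innerprod{n}{x})$, summing over $j$ gives an expression for $\grad\.v$ whose $(i,n)$-th Fourier content equals $-n_i\sqrt{\rho_{k_i}(n)}\,\xi_{i,n,1}$ on the sine mode and $+n_i\sqrt{\rho_{k_i}(n)}\,\xi_{i,n,2}$ on the cosine mode. Since $\lap$ acts as multiplication by $-\norm{n}^2$ on the $n$-th Fourier mode, solving $\lap f = \grad\.v$ amounts to dividing each coefficient by $-\norm{n}^2$; the $n = 0$ mode is omitted (this is the usual additive-constant ambiguity of the Poisson equation, consistent with the $n\neq 0$ restriction in the statement, and it drops out of $\grad\.v$ automatically since $n_i = 0$ there). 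Collecting signs recovers the claimed formula for $f$, and joint Gaussianity of $(v, f)$ is immediate since both series are built from the same Gaussian coefficients $\xi_{i,n,j}$.

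The main obstacle is analytic rather than algebraic. Summing over all of $\Z^d$ using both $\cos$ and $\sin$ is redundant---cosine is even and sine is odd in $n$---and must be reconciled with the normalization of \eqref{eqn:mercers} by fixing a convention such as summing over a fundamental domain in $\Z^d$, or equivalently halving the spectral weight on paired modes. Separately, the smoothness hypothesis on $k$ is needed to ensure $\rho_{k_i}(n)$ decays fast enough that the series defining $\grad\.v$ and $f$ converge in a suitable (say $L^2$ or Sobolev) sense, which in turn is what legitimizes the term-by-term differentiation and mode-wise Laplacian inversion used above. Modulo these bookkeeping points, the proof reduces to a direct application of the spectral calculus for periodic Gaussian processes reviewed in \Cref{sec:background:gp}.
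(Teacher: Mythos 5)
Your proposal is correct and follows essentially the same route as the paper: expand $v$ in its Karhunen--Loève/Fourier basis, push $\grad\. v$ and $\lap$ through term-by-term, and match coefficients mode-by-mode, with the $n=0$ mode dropping out as the constant ambiguity. The only cosmetic difference is that you invert the Laplacian directly on the series for $\grad\. v$, whereas the paper posits a Karhunen--Loève ansatz for $f$ with an unknown spectral measure $\phi(n)$ and solves for the coupling (which it then reuses in the cross-covariance proof); your closing remarks on the even/odd redundancy of the $\Z^d$ sum and on the Sobolev-regularity needed to justify the formal manipulations correspond to the paper's appeal to generalized Gaussian fields.
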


Together, these expressions give us a way, in principle, to obtain either means and covariances, or random samples, from $f\given\v{v}$, the latter via the pathwise conditioning expression \Cref{eqn:pathwise-spsr}---so long as we can scalably solve the linear system, and compute all of the respective Fourier series.
The computational cost of assembling the cross-covariance is $\c{O}(LMN)$, and, for sampling, the cost of computing the necessary set of prior samples is $\c{O}(L'NP)$, where $L$ and $L'$ are the number of Fourier basis functions used for the cross-covariance and prior, respectively, $N$ is the size of the training data, $M$ is the size of the test data, and $P$ is the number of Monte Carlo samples needed.
A visualization of how the resulting approximation error affects reconstruction fidelity is given in \Cref{fig:armadillo}.
Following \textcite{wilson20,Wilson2020PathwiseCO}, one can expect to need $L' \ll L$ basis functions, due to the Fourier basis' efficiency at representing random functions drawn from the prior.
We will address computations costs of the cross-covariance term further in the sequel, but first proceed to handling the linear system.

\subsection{Scalability: Training Data Size}

\begin{figure}
\includegraphics{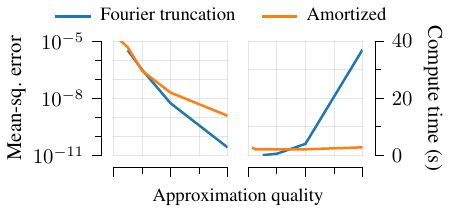}
\caption{We compare direct cross-covariance computation by truncating the Fourier series with its amortized analog, in terms of accuracy (kernel matrix MSE evaluated at a random batch of points) and runtime. 
Approximation quality is determined by number of Fourier coefficients and amortization grid size, respectively.
Amortization can substantively improvement runtimes when its error level and initial training cost are acceptable.}
\label{fig:amortization}
\end{figure}

A key challenge of working with Gaussian process methods in computer graphics applications is that they require one to solve dense linear systems, typically at cost $\c{O}(N^3)$---where, in our setting, $N$ scales according to the number of points in the point cloud.
While this may seem as a major challenge, observe that the (random) linear system of interest, namely 
\[
(\m{K}_{\v{v}\v{v}}+\m\Sigma)^{-1}(\v{v} - v(\v{x}) - \v\eps)
\]
is identical in form to the linear system which appears in ordinary Gaussian process regression---and, indeed, in kernel ridge regression.
The situation for computing posterior covariances is similar.
This means that, even though we are in an interdomain setting which involves the Poisson equation and other machinery, we can apply standard large-scale Gaussian process techniques in an \emph{unmodified} manner---all of our additional machinery kicks into play only once the solve is done.

With this in mind, the question becomes how to select an appropriate linear-time Gaussian process approximation.
Since correctly capturing the fine details of the object is a key goal of surface reconstruction, we restrict ourselves to approximations which are designed to perform well in \emph{large-domain} regimes \cite{lin23,terenin24} where the kernel's length scale is small relative to the domain covered by the data.
In such regimes, inducing point methods \cite{titsias09a,hensman13}---which are otherwise popular and efficient---are known to perform poorly \cite{lin23,terenin24}.
Where Cholesky factorization is not viable, we therefore apply the sampling-based \emph{stochastic gradient descent} approach of \textcite{lin23,lin24}, which was shown in those works to perform well at data sizes up to $N \approx \f{20m}$ in these~regimes.

\subsection{Amortized Cross-Covariance Computation}
\label{sec:amortized}
\begin{figure}
\includegraphics[width=\linewidth]{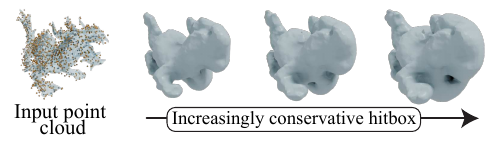}
\caption{A stochastic formalism for surface reconstruction allows us to construct conservative hitboxes or cages for collision detection that account for geometric uncertainty.}
\label{fig:hitbox}
\end{figure}

We now re-examine the cross-covariance computation, which costs $\c{O}(LMN)$ to assemble, if computed via truncation.
In contrast with the prior samples for $f$ and $v$---where, by $L^2$-optimality of the Fourier basis for representing stationary Gaussian processes, a relatively-small number of Fourier basis functions suffice \cite{wilson20,Wilson2020PathwiseCO}---for the cross-covariance term, the number of Fourier basis functions $L$ needed to ensure accurate computation can become large, even in dimension three.
To avoid this computation becoming the most-expensive part of the overall pipeline, we propose an amortization scheme built as follows.

Observe that, owing to stationarity, the cross-covariance $k_{f,v}(x,x')$ depends only on the distance $x - x' \in \R^3$.
Moreover, this term does not depend on any hyperparameters except the model's length scale---and, thus, is problem-independent.
Furthermore, it is smooth, since lower-frequency Fourier terms in the Mercer's expansion have larger coefficients.

\begin{figure*}
\includegraphics[width=\linewidth]{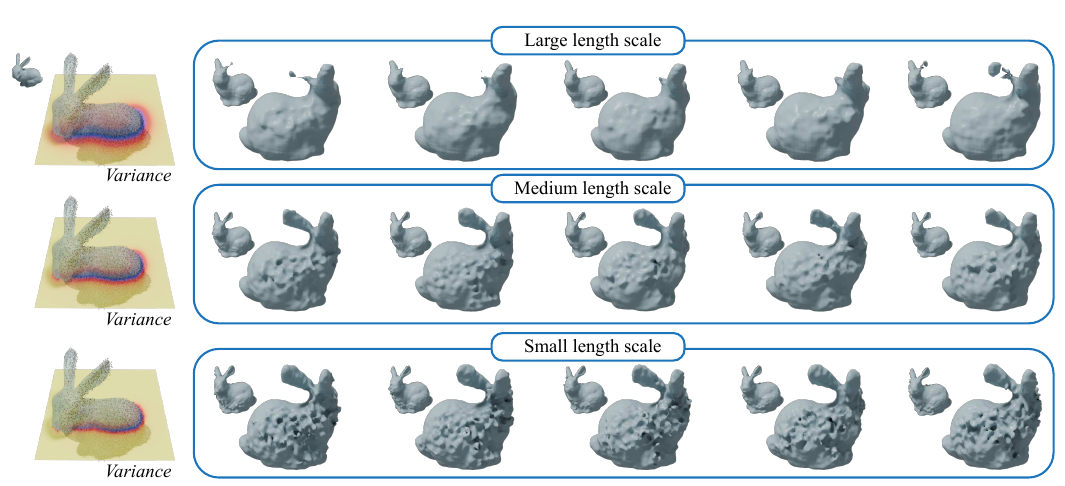}
\caption{Through a pathwise-conditioning-based approach, our algorithm allows us to sample different possible surfaces from the Gaussian process determined by the input point cloud. This is to be contrasted with SPSR \cite{Sellan2022StochasticPS}, which produces irregular samples due to various (including potentially diagonal) approximations for the covariance---a set of samples is shown in \Cref{apdx:additional_results}.}
\label{fig:sampling}
\end{figure*}

Taking advantage of these observations, we propose to precompute $k_{f,v}$ on a grid in $[0,2\pi]^d$ and apply simple linear interpolation to evaluate $k_{f,v}$ at other input locations.
While this ostensibly makes our approach grid-dependent---part of what we wanted to avoid in the first place---note again that $k_{f,v}$ is smooth and problem-independent.
This should be contrasted with the Poisson equation's solution, which is by definition problem-dependent and whose smoothness is completely dependent on properties of the surface being reconstructed.
As consequence, one can expect grid-based amortization of $k_{f,v}$ to be substantially-less-difficult than the original, grid-based Poisson solve we sought to avoid.
We examine its efficiency quantitatively in \Cref{fig:amortization}.

\subsection{Supported Statistical Queries}
\label{sec:queries}

A key difference in our computational pipeline, compared to ordinary Stochastic Poisson Surface Reconstruction, is that we also allow Monte Carlo sampling, rather than only supporting computations using  means and covariances.
We thus detail how various posterior queries can be~handled.

\paragraph{Collision detection.}
This involves computing the probability the reconstructed surface intersects a given known object. 
We handle this by sampling $M$ points on the surface of the known object, denoted $x_i$, and computing $\P(f(x_i) \geq 0, \forall i = 1,..,M)$ from the samples.

\paragraph{Ray-casting.}
This works similarly to collision detection: we compute the transmittance, which amounts to the joint probability the ray has not hit the object. Letting $x_t$ be the path of the ray, this is $\P(f(x_t) \leq 0, \forall t=0,..,T)$.

\paragraph{Uncertainty-aware hitbox generation.}
The standard approach for generating hitboxes is to uniformly scale-up the object, then simplify its geometry,
One can generalize this to an uncertainty-aware approach by scaling the object more in regions with higher uncertainty, for instance by defining the hitbox according to the implicit surface given by a function of the form $g(x) = \mu_{f\given\v{v}}(x) - \eta\sqrt{k_{f\given\v{v}}(x,x)}$, defined using the posterior mean and standard deviation.
Here, for different values of $\eta$, the higher the value, the more conservative the hitbox.
Extracting the zero level set as a triangle mesh can then be done using standard contouring algorithms such as marching cubes \cite{lorensen1998marching}, as shown in \Cref{fig:hitbox}.

\paragraph{Next-view planning.}
A key advantage of having a probabilistic formalism is that, following approaches in areas like Bayesian optimization \cite{garnett23} and probabilistic numerics \cite{hennig22}, one can use the obtained uncertainty to adaptively select where to gather data next.
For scanning, \textcite{Sellan2022StochasticPS} propose to adaptively select camera angles by optimizing a quantity termed the \emph{total uncertainty}, which is defined to be the expression 
\[
\int_B (0.5 - |{\P}(x\in\Omega) - 0.5|) \d x
.
\]
This formulation involves computing the uncertainty over the entire volume to evaluate the score of a single ray, making it potentially expensive.
Note, however, that this limitation does not affect SPSR, which requires one to query the GP over the entire volume in order to perform the Poisson solve in the first place.

Since our single-solve strategy allows us to more-efficiently query the GP a smaller set of points compared to the whole volume---we show this in \Cref{fig:next-view}, and discuss further in \Cref{sec:results}---we propose a modified approach which only relies on computing a function over the desired ray.
For a given camera position and angle, let $x_t$ be a sequence of points along the center ray, where $t$ refers to the total travel distance.
Using these points, we compute the transmittance
\[
T(t) = \P(f(x_{\tau}) > 0,\forall \tau \leq t)
.
\]
We then compute the total distance where the center ray is not approximately equal to zero or one, namely 
\[
\int_0^\infty \1_{\eps \leq T(t) \leq 1-\eps} \d t
\]
for a given $\eps>0$.
Intuitively, this measures the range of possible distances the surface can be from the ray origin.

\begin{figure}
\includegraphics[width=1\linewidth]{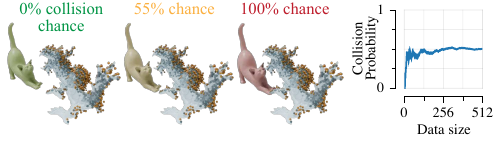}
\caption{By querying joint probabilities from the Gaussian process produced with our algorithm, we can compute the collision probability of the three cats with the partially observed dragon.}
\label{fig:cats}
\end{figure}

\begin{figure}
\includegraphics[width=1\linewidth]{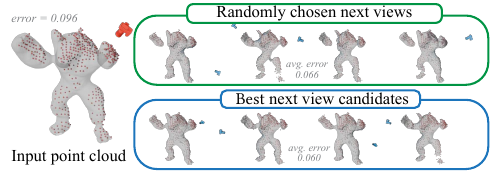}
\caption{Our method's support for joint probability queries enable us to perform sequential scanning by adaptively selecting the next view in order to minimize total uncertainty.}
\label{fig:next-view}
\end{figure}

\section{Experiments and Applications}
\label{sec:results}

We now demonstrate the proposed approach empirically on a suite of example problems.
Given its identical problem statement and similar theoretical backdrop, we use the original Stochastic Poisson Surface Reconstruction (SPSR) work by \textcite{Sellan2022StochasticPS} as our main evaluation baseline, for which we utilize the authors' open-source implementation \cite{gpytoolbox}.
Implementation details and other experiment information is given in \Cref{apdx:experiments}.

\subsection{Comparisons}

To begin, we verify that our proposed algorithm output qualitatively performs at least as well as the baseline, even though the computational pipeline used is almost completely different.
In \Cref{fig:scorp} and \Cref{fig:armadillo}, mirroring Figure 11 of \textcite{Sellan2022StochasticPS}, we see that the mean and variance produced by our approach, as well as probability queries, behave in a similar qualitative manner to those of that work.

Next, we demonstrate that our approach can provide functionality which is either not supported by SPSR, or on which SPSR scales poorly.
Of the capabilities described in \Cref{sec:method}, we focus first on the property that it avoids the need to compute a finite element solve over a volumetric mesh or grid.
A consequence of this requirement for the SPSR baseline is that the model's ability to resolve fine details is not controlled purely by the Gaussian process' length scale, and is instead also limited to the refinement level of the finite element mesh, which in turn is limited by the system's memory.
\Cref{fig:tree} shows that, in contrast, our model's predictions continue to become sharper as length scales decrease.

\begin{figure}
\includegraphics[width=1\linewidth]{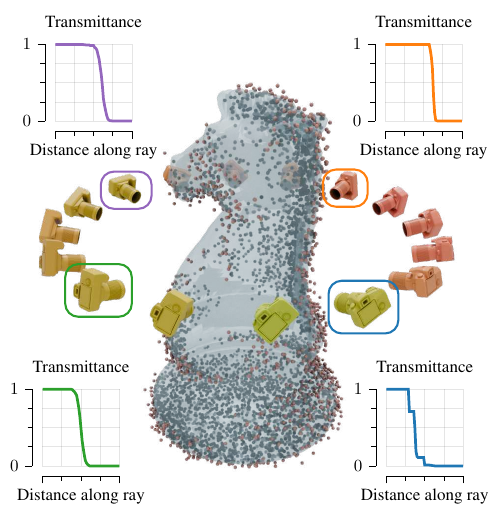}
\caption{Our statistical formalism allows us to compute joint collision probabilities along a specific direction. In graphics, this equates to the transmittance of a cast ray, a quantity that can be used to score potential future camera positions (greener is better).}
\label{fig:springer}
\end{figure} 

\begin{figure*}
\includegraphics[width=\textwidth]{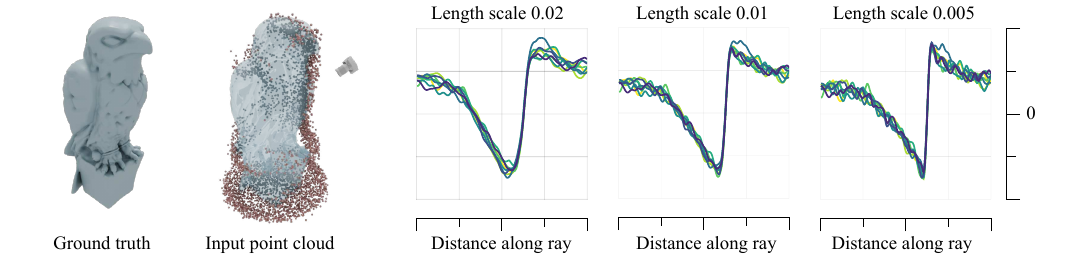} 
\caption{Using our formalism, we showcase random samples drawn from $f$ at a reduced set of points---here, a ray---without the need to query the Gaussian process at a large volumetric bounding box.}
\label{fig:falcon}
\end{figure*}

By avoiding a finite-element-based Poisson solve, our algorithm is also \emph{output-sensitive}: \Cref{fig:output-sensitive} shows that, unlike SPSR, our proposed algorithm's runtime scales with the number of queried points, and not with the size of an entire volumetric grid enclosing the point cloud.
This is a critical advantage in many common high-performance computer graphics applications, like ray casting or collision detection, where the stochastic surface need only be queried at a smaller, potentially lower-dimensional number of points in space---for instance, along a line or plane, rather than over a three-dimensional volume.

Finally, both SPSR and its neural-network-based follow-up \cite{sellan2023neural} discard all correlations between the input data by approximating the sample covariance with a diagonal matrix. 
This approximation---which is not theoretically justified beyond very specific configurations---makes SPSR unable to effectively answer statistical queries that rely on correlations.
As we show in \Cref{fig:sampling}, our approach directly generates smooth samples from the reconstructed surface, in contrast with samples arising from SPSR's diagonal approximation to the inverse kernel matrix, where smoothness comes purely from the Poisson solve performed at the end.
This capability combines with output-sensitivity: from \Cref{fig:falcon} shows the proposed method is able to sample along a set of points given by a single ray without requiring any global Poisson-solve-like volumetric computation.
While here we limit ourselves to qualitative demonstrations, to conclude, we note that handling correlations and smoothness well a critical part of principled data-acquisition algorithms such as, for instance, in Bayesian optimization: we hope the capabilities showcased potentially pave the way for similar tools in computer graphics applications.

\begin{figure}[t!]
\includegraphics[width=\linewidth]{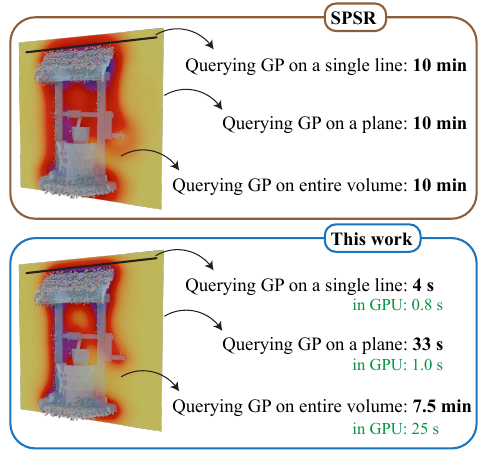}
\caption{Our single-solve stochastic reconstruction algorithm is output sensitive: its complexity scales linearly with the number of test points (bottom). This is in contrast to SPSR, which requires a global volumetric FEM solve regardless of the required test set.}
\label{fig:output-sensitive}
\end{figure}

In summary, examining the aforementioned figures, we find that our algorithm qualitatively matches the outputs produced by SPSR while alleviating a number of its limitations.
We now examine how this affects downstream applications of stochastic surface reconstruction.

\subsection{Applications}

The most basic stochastic reconstruction task relevant in computer graphics, vision and robotics applications is the computation of containment queries: for example, to find if a projectile hits a target, if an autonomous car collides with an obstacle, or if a grasping pose is free of intersections.
Our Gaussian process formulation allows us to answer these queries through the computation of marginal probabilities when the location is zero-dimensional---see \Cref{fig:armadillo} and \Cref{fig:scorp}---or through joint probabilities when these are one, two or three-dimensional---see \Cref{fig:cats} and \Cref{fig:output-sensitive}.
In comparison with the SPSR baseline, the output-sensitivity of our method make it ideal for containment tasks like this one, in which the captured scene---such as an entire streetscape or room---can be significantly larger than the relevant queried surface---such a car parked on the street, or a robotic gripper.

One of our method's limitations is that, even with our improved, output-sensitive performance, formally computing collision likelihoods may be beyond the computational requirements of real-time applications. 
\Cref{fig:hitbox} shows that, for these cases, our method can be used to generate uncertainty-aware hitboxes for shapes in the form of triangle meshes, such that they can be efficiently tested for intersection, using the uncertainty-aware approach outlined in \Cref{sec:queries}.

In \Cref{fig:next-view,fig:springer,fig:falcon}, we show how our algorithm can be applied to the prototypical stochastic reconstruction application of ray casting and next best view planning.
In particular, \Cref{fig:springer}, inspired by Figure 26 of \textcite{Sellan2022StochasticPS}, combines computing the shape's transmittance with scoring a set of potential future camera positions to select the optimal next view, outlined further in \Cref{sec:queries} for details.
Unlike SPSR, using our approach to compute the score of each camera position requires querying the Gaussian process along the proposed one-dimensional camera ray, and not an entire volumetric grid surrounding the point cloud.

\section{Conclusion}

In this paper, we introduced a reformulation of stochastic Poisson surface reconstruction which requires only one linear solve, for performing Gaussian process interpolation.
This is in contrast with prior work, which requires additional linear solves arising from its use of a volumetric finite element method.
This was achieved by applying tools from geometric Gaussian processes in Fourier space, which effectively replaces the solve with multiplication by a matrix whose entries are given by a certain Fourier series.
As a result of this formulation, the proposed method's computational costs are \emph{output-sensitive}, in the sense that they scale with the number of points at which the posterior Gaussian process is evaluated, and not on the size of volumetric meshes or analogous quantities which depend on the space enclosing the object.
The proposed method was shown to support the same set of statistical queries as prior work, and additionally provide new, random-sampling-based queries which take into account smoothness properties captured by the kernel's correlations.
Our work constitutes a first step to incorporating sample-efficient data acquisition techniques from the Gaussian process and Bayesian optimization literatures into surface reconstruction and computer graphics.

\section*{Impact Statement}

This paper presents work whose goal is to apply recent advances in machine learning in order to advance the field of computer graphics. There are many potential societal consequences of our work, none which we feel must be specifically highlighted here.

\printbibliography

\onecolumn

\appendix 

\section{Theory}
\label{apdx:theory}

Here we prove the propositions which state the cross-covariance and prior sample expressions for the Gaussian process prior and its respective Poisson equation solution.
In what follows, by \emph{sufficiently smooth} we mean that each $k_i$ lies in the Sobolev space $H^{\frac{3}{2} + \frac{d}{2}}(\bb{T}^d\x\bb{T}^d;\R)$: for the periodic Matérn kernels used in this work, with $\nu \geq \frac{3}{2}$, this is guaranteed by \textcite[Theorem 20]{azangulov24a,azangulov24b}---see also \textcite{devito20,Borovitskiy2020MaternGP}.
By standard stochastic PDE theory \cite{lototsky17}, this condition ensures that the Poisson equation $\lap f = \grad\. v$ admits an almost sure weak solution which is unique up to constant shifts.
Taking the constant shift part to be zero, by the same theory, this solution can be viewed as a Gaussian process which is almost surely continuous.
It also follows that $\rho_k$ defines a finite measure, meaning $\sum_{n\in\Z^d} \rho_k(n)$ is finite.
We first derive the prior sampling formulas, as we will use them in the cross-covariance calculation.

\PropSampling*

\begin{proof}
We first describe a formal calculation which gives the desired expression, deferring a discussion on regularity to the end.
Note that each $k_i$ is stationary and continuous by assumption: hence, each $v_i$ admits a Karhunen--Loève expansion 
\[
v_i(x) &= \sum_{n\in\Z^d} \sqrt{\rho_{k_i}(n)} (\xi_{i,n,1}\cos(\innerprod{n}{x'}) + \xi_{i,n,2}\sin(\innerprod{n}{x'}))
&
\xi_{i,n,j} &\~[N](0,1)
\]
where $\rho_{k_i}(n)$ are the respective spectral measures of the kernels $k_i$.
On the other hand, note that since both the Laplacian and divergence operators are shift-equivariant, it follows that $f$ is also stationary.
Since $f$ is almost surely continuous, its covariance is as well, and it admits a Karhunen--Loève expansion
\[
f(x) &= \sum_{n\in\Z^d} \sqrt{\phi(n)}(\zeta_{n,1} \cos(\innerprod{n}{x}) + \zeta_{n,2}\sin(\innerprod{n}{x}))
&
\zeta_{n,j} &\~[N](0,1)
.
\]
By definition, $\grad\. v = \lap f$: we now formally calculate the gradient and Laplacian of the respective expansions. 
These are
\[
(\grad\. v)(x) = \sum_{n\in\Z^d} \sum_{i=1}^d n_i \sqrt{\rho_{k_i}(n)} (-\xi_{i,n,1}\sin(\innerprod{n}{x'}) + \xi_{i,n,2}\cos(\innerprod{n}{x'}))
\]
and 
\[
(\lap f)(x) = \sum_{n\in\Z^d} -\norm{n}^2 \sqrt{\phi(n)}(\zeta_{n,1} \cos(\innerprod{n}{x}) + \zeta_{n,2}\sin(\innerprod{n}{x}))
\]
where, since under $L^2$-regularity these expressions cannot be understood as actual random functions in the obvious sense, we defer a rigorous definition of their precise meaning to later.
Continuing the calculation, since sines and cosines are linearly independent, these series must be equal term-by-term.
This means 
\[
\sum_{i=1}^d n_i \sqrt{\rho_{k_i}(n)} \xi_{i,n,1} &= \norm{n}^2 \sqrt{\phi(n)}\zeta_{n,2} 
&
-\sum_{i=1}^d n_i \sqrt{\rho_{k_i}(n)} \xi_{i,n,2} &= \norm{n}^2 \sqrt{\phi(n)}\zeta_{n,1}
\]
where $\phi(n)$ and $\zeta_{n,j}$ must be chosen such that the latter are IID standard normal.
Choosing 
\[
\phi(n) &= \frac{\sum_{i=1}^d n_i^2 \rho_{k_i}(n)}{\norm{n}^4}
&
\zeta_{n,1} &= \frac{-\sum_{i=1}^d n_i \sqrt{\rho_{k_i}(n)} \xi_{i,n,2}}{\sqrt{\sum_{i=1}^3 n_i^2 \rho_{k_i}(n)}}
&
\zeta_{n,2} &= \frac{\sum_{i=1}^d n_i \sqrt{\rho_{k_i}(n)} \xi_{i,n,1}}{\sqrt{\sum_{i=1}^d n_i^2 \rho_{k_i}(n)}}
\]
produces a coupling which satisfies these requirements. 
To complete the proof, we must give rigorous meaning to the formal computations performed above.
To do so, following an approach mirroring \textcite{Borovitskiy2020MaternGP}, we reinterpret $\grad\. v$ and $\lap f$ as \emph{generalized Gaussian fields} in the sense of \textcite{lototsky17}---that is, we interpret differentiation as acting on the respective covariance operators defined by the Mercer's expansion of both random fields.
Under this interpretation, the maps $\lap$, $\lap^{-1}$, and $\grad\.(\.)$ can be lifted to act on the respective covariances instead of the stochastic processes themselves: following \textcite{devito20}, when restricted to the appropriate Sobolev spaces, these maps are continuous and bijective up to appropriate constant shifts.
Using this, one can calculate the generalized Gaussian field corresponding to the image of $v$ under the Poisson equation's solution map, and from it obtain a Mercer's expansion.
From this, one directly obtains the respective Karhunen--Loève expansion: equating this to the expansion of $f$ above produces a calculation whose steps---up to working with modified definitions---and result match what is given above.
\end{proof}

With these expressions derived, we are now ready to calculate the respective cross-covariance.

\PropCrossCovariance*

\begin{proof}
Applying \Cref{prop:sampling}, this follows by direct calculation:
\[
&\Cov(f(x),v_i(x')) = \E(f(x)v_i(x'))
\\
&\quad= \E \del{\sum_{n\in\Z^d} \sqrt{\phi(n)}(\zeta_{n,1} \cos(\innerprod{n}{x}) + \zeta_{n,2}\sin(\innerprod{n}{x}))} 
\\
&\quad\qquad\x\del{\sum_{n\in\Z^d} \sqrt{\rho_{k_i}(n)} (\xi_{i,n,1}\cos(\innerprod{n}{x'}) + \xi_{i,n,2}\sin(\innerprod{n}{x'}))}
\\
&\quad= \E \sum_{n\in\Z^d} \sqrt{\phi(n)\rho_{k_i}(n)} (\zeta_{n,1} \cos(\innerprod{n}{x}) + \zeta_{n,2}\sin(\innerprod{n}{x}))  (\xi_{i,n,1}\cos(\innerprod{n}{x'}) + \xi_{i,n,2}\sin(\innerprod{n}{x'}))
\\
&\quad= \E \sum_{n\in\Z^d} \frac{-n_i \sqrt{\rho_{k_i}(n)} \xi_{i,n,2}^2}{\norm{n}^2} \cos(\innerprod{n}{x})\sin(\innerprod{n}{x'}) + \frac{n_i \sqrt{\rho_{k_i}(n)} \xi_{i,n,1}^2}{\norm{n}^2} \sin(\innerprod{n}{x})\cos(\innerprod{n}{x'})
\\
&\quad= \sum_{n\in\Z^d} \frac{-n_i \sqrt{\rho_{k_i}(n)}}{\norm{n}^2} \cos(\innerprod{n}{x})\sin(\innerprod{n}{x'}) + \frac{n_i \sqrt{\rho_{k_i}(n)}}{\norm{n}^2} \sin(\innerprod{n}{x})\cos(\innerprod{n}{x'})
\\
&\quad= \sum_{n\in\Z^d} \frac{n_i\sqrt{\rho_{k_i}(n)}}{\norm{n}^2} \sin(\innerprod{n}{x-x'})
\]
where we pass the expectation inside the infinite series using Fubini's Theorem, for which we now verify that absolute integrability holds.
To see this, first note that $\frac{n_i}{\norm{n}^2} \leq 1$ and $|{\sin(\.)\cos(\.')}|\leq 1$, then write $\sum_{n\in\Z^d} \E \sqrt{\rho_{k_i}(n)} \xi_{i,n,j}^2 = \sum_{n\in\Z^d} \sqrt{\rho_{k_i}(n)} \leq \sqrt{\sum_{n\in\Z^d} \rho_{k_i}(n)} < \infty$, where the second-to-final step is by Jensen's inequality, and finiteness of the sum follows from the assumed smoothness of $k$. 
We conclude that all necessary sums and expectations are finite and do not depend on the order of integration, from which the claim follows.
\end{proof}

\section{Experimental Details}
\label{apdx:experiments}

\paragraph{Implementation.} We implement our algorithm in Python using \textsc{Gpytoolbox} \cite{gpytoolbox} for common geometry processing subroutines, \textsc{JAX} \cite{jax2018github} for numerical computations, and render our results in Blender using \textsc{BlenderToolbox} \cite{blendertoolbox}.
All reported timings are calculated on a machine running Ubuntu 20.04 with an Intel Xeon Silver 4316 CPU, 256GB RAM, and an Nvidia RTX A6000 GPU.
All of our results use the Matérn kernel with $\nu = 3/2$ and length scales between $4 \. 10^{-2}$ and $1 \. 10^{-2}$, depending on the specific mesh.
Unless specified otherwise, we use $L = 100^3$ functions for the cross-covariance and $L' = 40^3$ functions for the prior samples. 
The cross-covariance is amortized using a total of $50^3$ points. 
To generate an amortization grid, we first evenly sample $[-1, 1]$ with $50$ points, raise them to the $5$th power in order to concentrate values near the origin, multiply by $\pi$ to obtain points in $[-\pi,\pi]$, then take the Cartesian product for each dimension.

\paragraph{Meshes.}
The Armadillo, Bunny, Falcon, Scorpion, Springer, Tree, and Well meshes are from Oded Stein's repository, at: \url{odedstein.com/meshes}. The Dragon mesh is originally from the Stanford 3D Scanning Repository---we use the version from Alec Jacobson's repository, at: \url{github.com/alecjacobson/common-3d-test-models/}.

\paragraph{Kernel numerical stability.}
The geometric Matérn-$3/2$ kernel on $\bb{T}^1$, given by \textcite{Borovitskiy2020MaternGP} can be numerically unstable, particularly for small length scales, say on the order of $10^{-2}$. 
This occurs due to the specific form of various hyperbolic functions in the formula.
To improve stability in floating point arithmetic, we equivalently rewrite the kernel as
\[
k_{3/2}(x, x') &= \frac{\sigma^2}{C_{3/2}} \del{\frac{\pi^2 \kappa}{3} \del{2\kappa + \sqrt{3} \coth\del{\frac{\sqrt{3}}{2\kappa}}}\cosh(u) - \frac{2\pi^2 \kappa^2}{3} u \sinh(u)}
\\
&= \frac{\sigma^2 \pi^2 \kappa}{3C_{3/2}} \del{\del{2\kappa + \frac{\sqrt{3}}{\tanh\del{\frac{\sqrt{3}}{2\kappa}}}}\cosh\del{\frac{w}{\kappa}} - 2w \sinh\del{\frac{w}{\kappa}}}
\\
&= \frac{\sigma^2 \pi^2 \kappa}{3C_{3/2}} \del{2\kappa + \frac{\sqrt{3}}{\tanh\del{\frac{\sqrt{3}}{2\kappa}}} - 2w \tanh\del{\frac{w}{\kappa}}} \cosh\del{\frac{w}{\kappa}}
\\
&= \frac{\sigma^2 \pi^2 \kappa}{6C_{3/2}} \del{2\kappa + \frac{\sqrt{3}}{\tanh\del{\frac{\sqrt{3}}{2\kappa}}} - 2w \tanh\del{\frac{w}{\kappa}}} \frac{\exp\del{\frac{w - \frac{\sqrt{3}}{2}}{\kappa}} + \exp\del{\frac{-w - \frac{\sqrt{3}}{2}}{\kappa}}}{\exp\del{-\frac{\sqrt{3}}{2}}}
\]
where distances run from $0$ to $1$, $u = \sqrt{3} \frac{|x - x'| - \frac{1}{2}}{\kappa}$, $w = \sqrt{3} |x - x'| - \frac{\sqrt{3}}{2}$, $C_{3/2}$ is a constant which ensures $k(x,x) = \sigma^2$, and $\kappa$ is the length scale parameter.
Our implementation uses the unnormalized version of this expression.

\section{Additional Results}
\label{apdx:additional_results}

\paragraph{Visualization of samples from SPSR baseline.}
Here we provide a visual illustration of samples generated by the SPSR baseline, using its provided mean and covariance.
We use a total of $25^3$ points, then compute the mesh via marching cubes.
Given the interplay between the grid spacing and kernel length scale imposed by SPSR, this coarse grid equates to a large length scale which produces over-smoothing.
In turn, unfortunately, using a finer grid---specifically, the $100^3$ grid from \Cref{fig:sampling}---is not feasible given the memory available on our system, due to the need to form a large covariance matrix.
The obtained samples are shown in \Cref{fig:baseline_samples}, and are seen to fail to capture a substantial portion of the reconstructed surface's fidelity due to over-smoothing.
This is the case even in well-sampled regions such as the front of the surface, due to the aforementioned interaction between length scale and grid spacing.
In comparison, samples obtained from the correct posterior, when computed using the techniques of our work shown in \Cref{fig:sampling}, are qualitatively much sharper.
This is in part because our approach allows for a much-larger set of $100^3$ points, which is feasible because it provides samples directly, without requiring the formation of a large covariance matrix.

\begin{figure*}[h!]
\includegraphics[width=\linewidth]{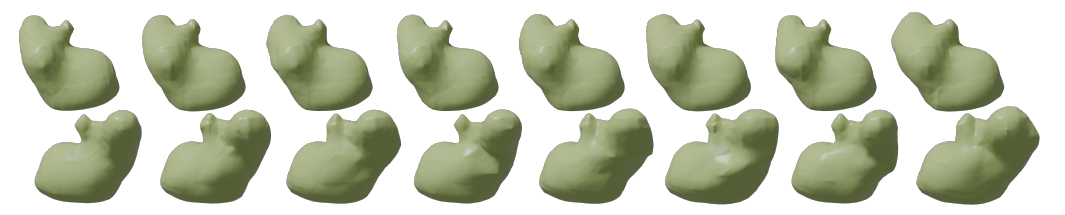}
\caption{Random samples drawn using the SPSR baseline.}
\label{fig:baseline_samples}
\end{figure*}

\paragraph{Comparison of SGD vs. Cholesky.} 
Here we examine the behavior of using SGD for the kernel matrix solve and compare it to the standard method of using Cholesky decomposition.
Throughout this work, we use the dual SGD variant proposed by \textcite{lin24}.
We observe in \Cref{fig:sgd_convergence} that SGD is able to provide a reasonable approximation after only $10$ iterations and converges to a solution not visually distinguishable from that found by Cholesky factorization within $1000$ iterations.
\Cref{fig:lengthscale} of the sequel additionally shows that the amount of iterations is similar across length scales---in contrast to Cholesky factorization, which will fail under sufficiently-large length scales due to ill-conditioning.

\begin{figure*}[h!]
\includegraphics[width=\linewidth]{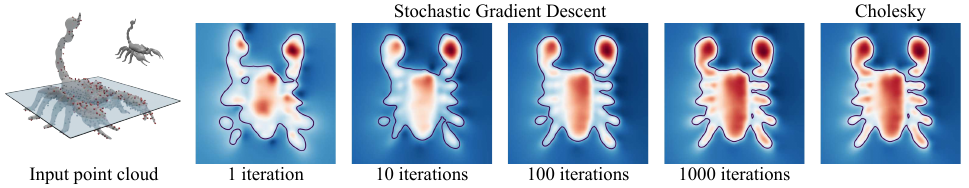}
\caption{SGD converges to a qualitatively comparable solution to that found by Cholesky factorization in a few thousand iterations.}
\label{fig:sgd_convergence}
\end{figure*}

\paragraph{Effect of amortization grid density.} 
We investigate how the grid density used for amortizing the cross-covariance function $k_{f,v}(x,x')$, using the procedure described previously in \Cref{apdx:experiments}, affects the posterior and reconstruction quality.
\Cref{fig:amortization_level} shows that using an $N$ that is too small, for instance $N = 5$, results in poor performance with contain strong artifacting.
On the other hand, increasing $N$ leads to higher-fidelity reconstructions with more high-frequency details.

\begin{figure*}[h!]
\includegraphics[width=\linewidth]{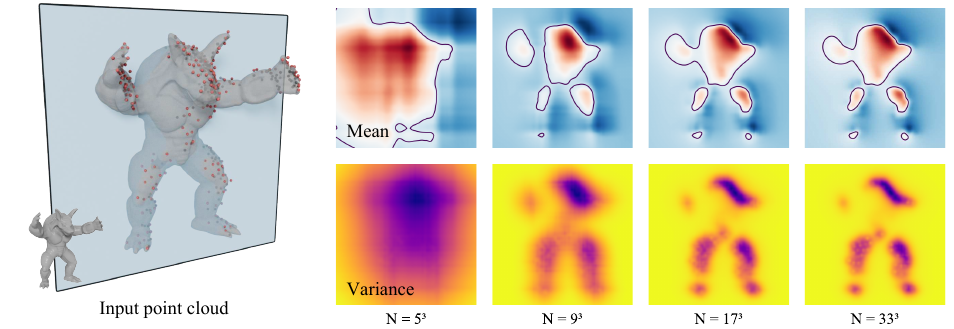}
\caption{The size of the amortization grid determines the reconstruction's high-frequency level of detail.}
\label{fig:amortization_level}
\end{figure*}

\paragraph{Runtime sensitivity to input and output size.} \Cref{fig:input_output_sensitivity} shows how our algorithm's wall-clock runtime scales with input size---that is, with the number of observed data points. 

\begin{figure*}[h!]
\includegraphics[width=\linewidth]{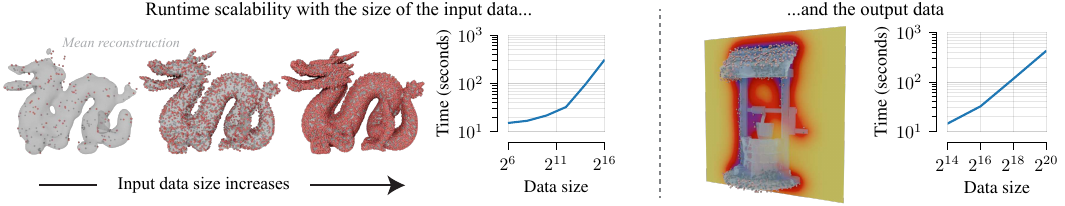}
\caption{An illustration of our algorithm's input and output sensitivity.}
\label{fig:input_output_sensitivity}
\end{figure*}

\paragraph{Runtime independence of length scale.} SPSR, due to only being able to evaluate the posterior on a finite element mesh, couples the kernel length scale with the output density, and thus runtime as well. 
In this example, we attempt to match the interpolation length scale between SPSR and our method as much as possible, given their different kernels.
\Cref{fig:lengthscale} gives a comparison, where each row has half the length scale compared to the row above. 
We observe in the first row that using large length scales with SPSR leads to too much interpolation post-solve which produces blurry results.
Our method instead does not require any interpolation post-solve since by construction we can evaluate the posterior anywhere, leading to a sharper reconstruction with larger length scales.
Furthermore, in practice, SPSR's runtime scales exponentially with the (inverse) length scale, in the sense that dividing the length scale by $2$ increases the runtime by roughly $2^d$. On the other hand, our method's runtime is empirically constant with respect to the length scale.
For comparison, in this example the amount of time to compute the mean---which is simply (a variant of) ordinary non-stochastic PSR---is $0.053$, $0.54$, and $9.1$ seconds, respectively, for the three length scales considered, in increasing order.

\begin{figure*}[h!]
\includegraphics[width=\linewidth]{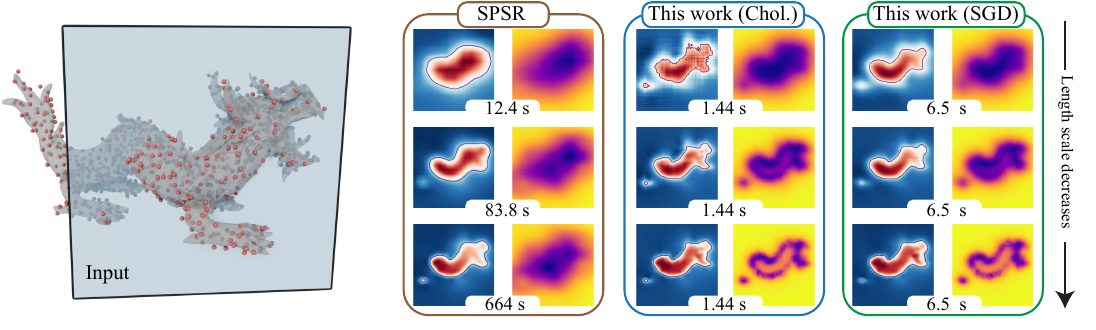}
\caption{Our runtime is not sensitive to the kernel's length scale, both with Cholesky and SGD.}
\label{fig:lengthscale}
\end{figure*}

\end{document}